\def\ojoin{\setbox0=\hbox{$\Join$}%
  \rule[.10ex]{.25em}{.4pt}\llap{\rule[1.0ex]{.25em}{.4pt}}}
\def\leftouterjoin{\mathbin{\ojoin\mkern-7.5mu\Join}}
\lstdefinestyle{short_code}{
  basicstyle=\small,
  frame=tblr,
  aboveskip=3mm,
  belowskip=3mm,
  showstringspaces=false,
  columns=flexible,
  breaklines=true,
  breakatwhitespace=true,
  tabsize=3
}
\lstdefinestyle{long_code}{
  basicstyle=\fontfamily{pcr}\small\selectfont,
  frame=tblr,
  aboveskip=3mm,
  belowskip=3mm,
  showstringspaces=false,
  columns=fixed,
  breaklines=true,
  breakatwhitespace=true,
  breakindent=6pt,
  tabsize=3,
  belowcaptionskip=3mm
}
\let\@ORGmakecaption\@makecaption
\long\def\@makecaption#1#2{\@ORGmakecaption{#1}{#2}\vskip\belowcaptionskip\relax}
\newcolumntype{Y}{>{\centering\arraybackslash}X}
  \providecommand\BibTeX{{%
    \normalfont B\kern-0.5em{\scshape i\kern-0.25em b}\kern-0.8em\TeX}}}
\begin{document}

\title{ExtGraph: A Fast Extraction Method of User-intended Graphs from a Relational Database}


\author{Jeongho Park}
\affiliation{%
  \institution{GraphAI}
  \city{Daejeon}
  \country{Republic of Korea}}
\email{jhpark@graphai.io}

\author{Geonho Lee}
\affiliation{%
  \institution{KAIST}
  \city{Daejeon}
  \country{Republic of Korea}}
\email{ghlee5084@kaist.ac.kr}  

\author{Min-Soo Kim}
\authornote{Corresponding author.}
\affiliation{%
  \institution{GraphAI}
  \city{Daejeon}
  \country{Republic of Korea}}
\email{minsoo.k@graphai.io}

\renewcommand{\shortauthors}{Jeongho Park, Geonho Lee, and Min-Soo Kim}

\begin{abstract}
Graph analytics is widely used in many fields to analyze various complex patterns.
However, in most cases, important data in companies is stored in RDBMS's, and so, it is necessary to extract graphs from relational databases to perform graph analysis. 
Most of the existing methods do not extract a user-intended graph since it typically requires complex join query processing.
We propose an efficient graph extraction method, \textit{ExtGraph}, which can extract user-intended graphs efficiently by hybrid query processing of outer join and materialized view.
Through experiments using the TPC-DS, DBLP, and IMDB datasets, we have shown that \textit{ExtGraph} outperforms the state-of-the-art methods up to by 2.78x in terms of graph extraction time.
\end{abstract}

\begin{CCSXML}
<ccs2012>
   <concept>
       <concept_id>10002951.10002952.10003190.10003192.10003210</concept_id>
       <concept_desc>Information systems~Query optimization</concept_desc>
       <concept_significance>500</concept_significance>
       </concept>
 </ccs2012>
\end{CCSXML}

\ccsdesc[500]{Information systems~Query optimization}
\keywords{Graph Extraction, Query Optimization, Join Sharing}


\maketitle

\section{Introduction}
\label{sec:introduction}

Graph analytics is suitable for analyzing various complex patterns within highly interconnected data.
It is already widely used in applications such as social media\cite{ali2023social}, finance\cite{sidorov2018graph}, e-commerce\cite{tuteja2021graph},  bio\cite{walsh2020biokg}, and AI services\cite{futia2020integration, tiddi2022knowledge}.
This growing interest has led to active research in graph analytics methods and algorithms, like recommendation queries and fraud detection.

However, the industry primarily utilizes relational databases for data storage and management.
Traditional relational analytics treats each tuple as an independent entity, focusing on aggregations such as COUNT, SUM, and AVERAGE.
In contrast, graph analytics is specialized in analyzing relationships and interactions between individual entities.
Therefore, applying graph analytics to relational data can obtain the results that are difficult to obtain with relational analytics\cite{jindal2014vertexica, chen2008graph, chen2009graph}.

There are two approaches to apply graph analytics to relational data: (1) translating graph analytics queries into equivalent relational queries\cite{jindal2014vertexica, fan2015case, sun2015sqlgraph, steer2017cytosm} and (2) applying graph queries to a graphs extracted from relational data\cite{de2013converting, jain2013graphbuilder, lee2015table2graph, hassan2018extending, perez2015ringo, xirogiannopoulos2017extracting, anzum2021r2gsync}.
In this paper, we focus on the second approach, which has an advantage that can process complex analytics queries at a relatively low cost once the graph is extracted and stored\cite{jin2022making, cheng2019category}. 
For example, Figure~\ref{fig:real_world_example}(a) is a relational database for a recommendation system, and Figure~\ref{fig:real_world_example}(b) shows the relational queries for the system.
The queries extract many relationships including the one between consumers who bought the same product ($Co\text -pur$) and the one between consumers who saw the same promotion ($Same\text -pro$).
The second approach do not need to perform such time-consuming queries, except the initial extraction of those relationships as a graph.

\begin{figure}[htb!]
    \vspace*{-0.3cm}
    \centerline{\includegraphics[width=0.95\linewidth]{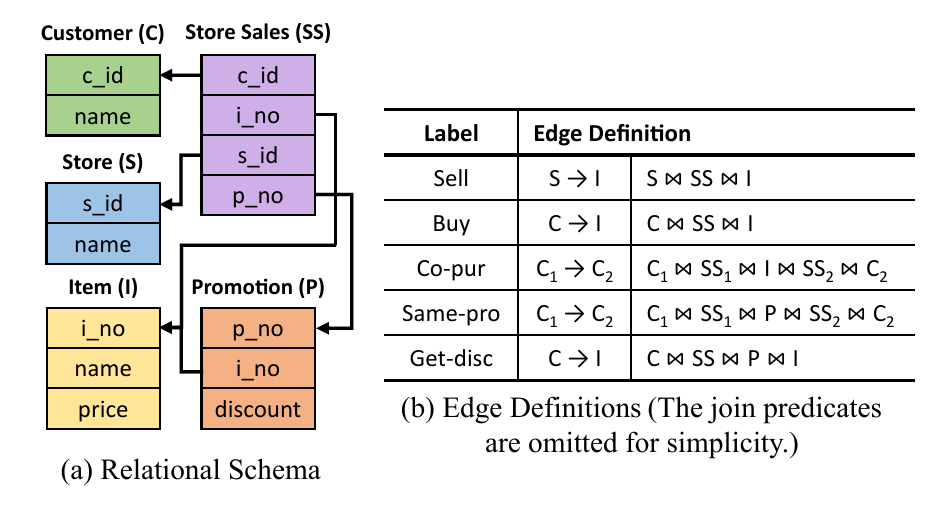}}
    \vspace*{-0.4cm}
    \caption{Example of edge definitions for a relational database. The join predicates are omitted for simplicity.}
    \label{fig:real_world_example}
    \vspace*{-0.3cm}
\end{figure}

Many methods have been proposed for graph extraction, including Table2Graph\cite{lee2015table2graph}, GRFusion\cite{hassan2018grfusion, hassan2018extending}, GraphGen\cite{xirogiannopoulos2015graphgen, xirogiannopoulos2017extracting} and R2GSync\cite{anzum2019graphwrangler, anzum2021r2gsync}, and they can be again categorized into the following two kinds of methods: \textit{schema-based} graph extraction\cite{de2013converting, de2014r2g, jain2013graphbuilder, willke2012graphbuilder, lee2015table2graph, hassan2018extending, hassan2018grfusion} and \textit{join workload-based} graph extraction\cite{perez2015ringo, xirogiannopoulos2015graphgen, xirogiannopoulos2017extracting, anzum2021r2gsync, anzum2019graphwrangler}. 
The \textit{schema-based graph extraction} methods define the edges from the referential relationships between tuples specified by foreign key constraints.
They have a relatively small time cost since the edge definition is straightforward, but they can define only simple graph models\cite{anzum2021r2gsync, xirogiannopoulos2017extracting}.
For example, they cannot extract the five kinds of edges in Figure~\ref{fig:real_world_example}(b).
The \textit{join workload-based graph extraction} methods define the edges from the join queries that express the relationships between tuples.
They are much more general than the former kind of methods in terms of graph model since it can extract an arbitrary graph that a user intends to express with join queries (hereafter denoted as a \textit{user-intended} graph)\cite{anzum2021r2gsync, xirogiannopoulos2017extracting}.
For example, they can define and extract the five kinds of edges in Figure~\ref{fig:real_world_example}(b).
However, they have an issue that can take a very long time to extract a graph if the queries for edge definitions are complex, for example, involving many time-consuming joins.
To alleviate this issue, GraphGen\cite{xirogiannopoulos2015graphgen, xirogiannopoulos2017extracting} and R2GSync\cite{anzum2021r2gsync, anzum2019graphwrangler} decompose the complex queries into simpler and smaller queries, extract the vertices and edges, and then convert them into a graph that consists of \textit{virtual vertices and edges}.
However, their resulting graphs are far from a user-intended graph because a single edge in the user-intended graph is decomposed into multiple vertices and edges in the resulting graph\cite{xirogiannopoulos2017extracting, anzum2021r2gsync}. 
Thus, they need a post-processing step to convert the resulting graph into a user-intended graph, which typically requires a significant amount of time\cite{xirogiannopoulos2017extracting}. 

In this paper, we propose a fast graph extraction method called \textit{ExtGraph} that can define and extract a user-intended graph without the post-processing step.
We observe that there are typically many redundant join operations among the edge definitions for a single user-intended graph and exploit the concept of \textit{join sharing(JS)} to improve the performance in graph extraction. 
In detail, we propose two join sharing techniques: \textit{join sharing by outer join(JS-OJ)} and \textit{join sharing by materialized view(JS-MV)}.
JS-OJ merges multiple queries for the edge definitions into a single outer join query, which implicitly removes the redundant common joins from the queries.
JS-MV explicitly stores the results of some common joins as materialized views and re-uses them.
Then, we propose a hybrid query planning that combines those techniques based on our cost model.

The main contributions of this paper are as follows:
\begin{itemize}
    \item We propose a novel graph extraction method that can extract user-intended graphs having no virtual edges and vertices.
    \item We propose two efficient join sharing techniques, JS-OJ and JS-MV that exploit the concept of join sharing(JS).
    \item We propose the cost models for JS-OJ and JS-MV and a hybrid graph extraction technique based on them.
    \item Through extensive experiments of TPC-DS and real datasets, we have shown ExtGraph significantly improve the graph extraction performance (by up to 2.74X), compared to the state-of-the-art methods.
\end{itemize}

The rest of this paper is organized as follows.
Section \ref{sec:preliminaries} summarizes the existing graph extraction methods and their limitations and the existing optimization techniques about common subexpressions.
In Section~\ref{sec:system_overview}, we explain overall method of ExtGraph.
Then, we propose join sharing in Section~\ref{sec:join_sharing} and present its cost model in Section~\ref{sec:cost_model}.
Section~\ref{sec:experimental_evaluation} presents the experimental results. 
Finally, we discuss related works in Section~\ref{sec:related_works} and conclude this paper in Section~\ref{sec:conclusions}.
\section{Preliminaries}
\label{sec:preliminaries}

\subsection{Problem of Graph Extraction}
\label{sec:graph_extraction}

To extract graphs from relational data, it is necessary to define the graph model to be extracted.
Previous studies define graph models as follows\cite{de2013converting, de2014r2g, jain2013graphbuilder, willke2012graphbuilder, lee2015table2graph, hassan2018extending, hassan2018grfusion, perez2015ringo, anzum2019graphwrangler, neo4jetl}.
\begin{definition}[\textbf{Graph model extracted from relational data}]
\label{def:graph_model_from_rdb}
For a given relational database instance $D = \left\{R_i\right\}$, a graph model is defined as $M=(M_v, M_e)$, where $M_v$ is a set of vertex definitions, and $M_e$ is a set of edge definitions. 
A vertex definition $m_v=(l_v, R_v)\in M_v$ indicates the vertex label $l_v$ and a table $R_v$ where each tuple in $R_v$ defines a vertex of label $l_v$. 
An edge definition $m_e=(l_e, m_{src}, m_{dst}, Q) \in M_e$ indicates an edge label $l_e$, a source vertex definition $m_{src}$ and a destination vertex definition $m_{dst}$ of the edge, and a query $Q$ that represents a join relationship between two vertices. 
Each tuple in the result of query $Q$ defines an edge with label $l_e$ that connects the vertices defined by vertex definitions $m_{src}$ and $m_{dst}$.
\end{definition}

Based on the definition above, the previous studies defines the graph extraction as follows\cite{de2013converting, de2014r2g, jain2013graphbuilder, willke2012graphbuilder, lee2015table2graph, hassan2018extending, hassan2018grfusion, perez2015ringo, anzum2019graphwrangler, neo4jetl}.
\begin{definition}[\textbf{Graph extraction}]
\label{def:graph_extraction}
Given a relational database instance $D = \left\{R_i\right\}$, graph extraction is the process of obtaining a directed multigraph $G=(V, E)$. 
It is composed of the following three steps:
\begin{enumerate}
    \item Define a graph model $M$ for $D$ according to Definition~\ref{def:graph_model_from_rdb}.
    \item Extract a set of vertices and a set of edges from $D$ (based on each $m_v \in M_v$ and $m_e \in M_e$).
    \item Convert the extracted vertices and edges into a graph.
\end{enumerate}
\end{definition}

\subsection{Schema-based Graph Extraction}
\label{sec:schema_based_graph_extraction}

This approach defines a graph model based on the referential relationships between tuples.
It defines edges from two types of referential relationships: \textit{foreign key constraint} and \textit{two foreign keys}\cite{anzum2019graphwrangler, anzum2021r2gsync, neo4jetl}. For the latter type, the direction of the edge is defined by users.
For example, Figure~\ref{fig:schema_based_extraction_example}(b) shows a graph model that can be defined for the relational schema in Figure~\ref{fig:schema_based_extraction_example}(a). 
In Figure~\ref{fig:schema_based_extraction_example}(b), the $s\textunderscore id$, $e\textunderscore id$, and $c\textunderscore id$ edges are defined from foreign key constraints that refer to the $Shipper$, $Employee$, and $Customer$ tables from the $Order$ table, respectively. 
The $OD$ edge is defined from two foreign keys in the $Order\ Detail$ table that refers the relationship between the $Product$ and $Order$ tuples.

\begin{figure}[htb!]
    \vspace*{-0.3cm}
    \centerline{\includegraphics[width=0.95\linewidth]{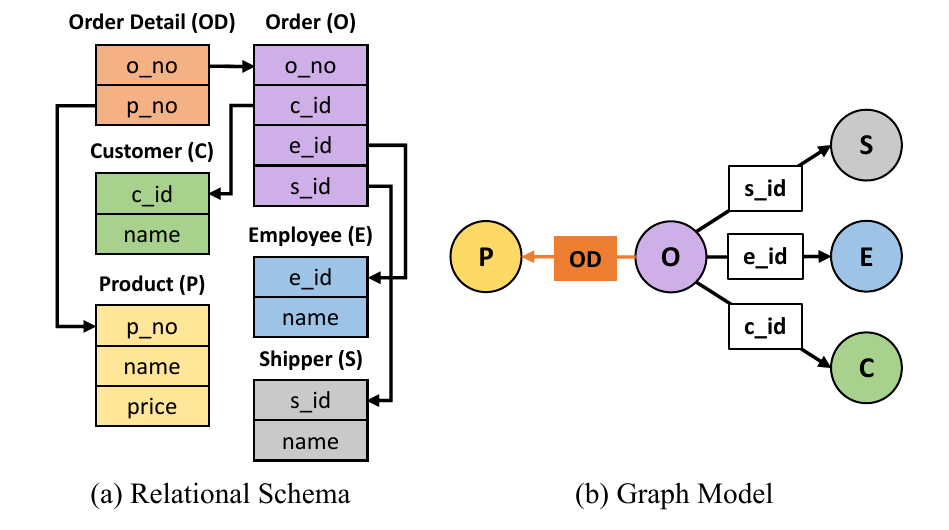}}
    \vspace*{-0.4cm}
    \caption{Example of schema-based graph extraction.}
    \label{fig:schema_based_extraction_example}
    \vspace*{-0.3cm}
\end{figure}

In schema-based graph extraction, the vertices and edges can be retrieved by simple relational operations such as table scan.
For example, it is typically sufficient to scan the $Order$ table for the $s\textunderscore id$, $e\textunderscore id$ and $c\textunderscore id$ edges, and scan the $Order\ Detail$ table for the $OD$ edge, in Figure~\ref{fig:schema_based_extraction_example}.
However, it has a clear limitation on the definable graph model and so cannot define a graph model containing the edges like $Co\text -pur$ or $Same\text -pro$ in Figure~\ref{fig:real_world_example}(b) as mentioned in Section 1.
The method of this approach include  R2G\cite{de2013converting, de2014r2g}, GraphBuilder\cite{jain2013graphbuilder, willke2012graphbuilder}, Table2Graph\cite{lee2015table2graph}, GRFusion\cite{hassan2018extending, hassan2018grfusion} and Neo4j ETL Tool\cite{neo4jetl}.

\subsection{Join Workload-based Graph Extraction}
\label{sec:join_based_graph_extraction}

This approach uses join queries to represent relationships between tuples and defines a graph model based on them. 
Thus, it can define complex relationships such as Figure~\ref{fig:real_world_example}(b) that cannot be modeled with referential relationships. 
It is similar to the concept of \textit{meta path} in heterogeneous graph neural networks\cite{fu2020magnn, sun2011pathsim}.

The method of this approach include Ringo\cite{perez2015ringo}, GraphGen\cite{xirogiannopoulos2015graphgen, xirogiannopoulos2017extracting}, and R2GSync\cite{anzum2021r2gsync, anzum2019graphwrangler}.
Ringo executes each join query for a single edge definition individually and converts the results into a graph. 
This method has a high time cost for graph extraction since it process the queries independently with each other, in particular when they are complex involving many joins or \textit{N}-to-\textit{N} joins between foreign keys. 
For example, $Co\text -pur$ and $Same\text -pro$ in Figure~\ref{fig:real_world_example}(b) are such complex queries involving a join among five tables.

GraphGen\cite{xirogiannopoulos2015graphgen, xirogiannopoulos2017extracting} and R2GSync\cite{anzum2021r2gsync, anzum2019graphwrangler} focus on reducing the time cost of processing the complex join queries. 
They decompose a complex join query into multiple simpler join queries.
Each decomposed query is defined as a \textit{virtual edge}, and a set of virtual edges for a single original join query is stored as a \textit{path}.
Figure~\ref{fig:join_based_extraction_example} shows the graphs extracted by Ringo, GraphGen, and R2GSync for $Co\text -pur$ edge (in blue) and $Same\text -pro$ edge (in orange) in Figure~\ref{fig:real_world_example}(b). GraphGen extracts each $Co\text -pur$ edge as a 2-hop path of virtual edges, each of which represents $C \Join SS \Join I$. R2GSync extracts each $Co\text -pur$ edge as a 4-hop path of virtual edges, each of which represents  $C \Join SS$ or $SS \Join I$. 

\begin{figure}[htb!]
    \vspace*{-0.3cm}
    \centerline{\includegraphics[width=0.95\linewidth]{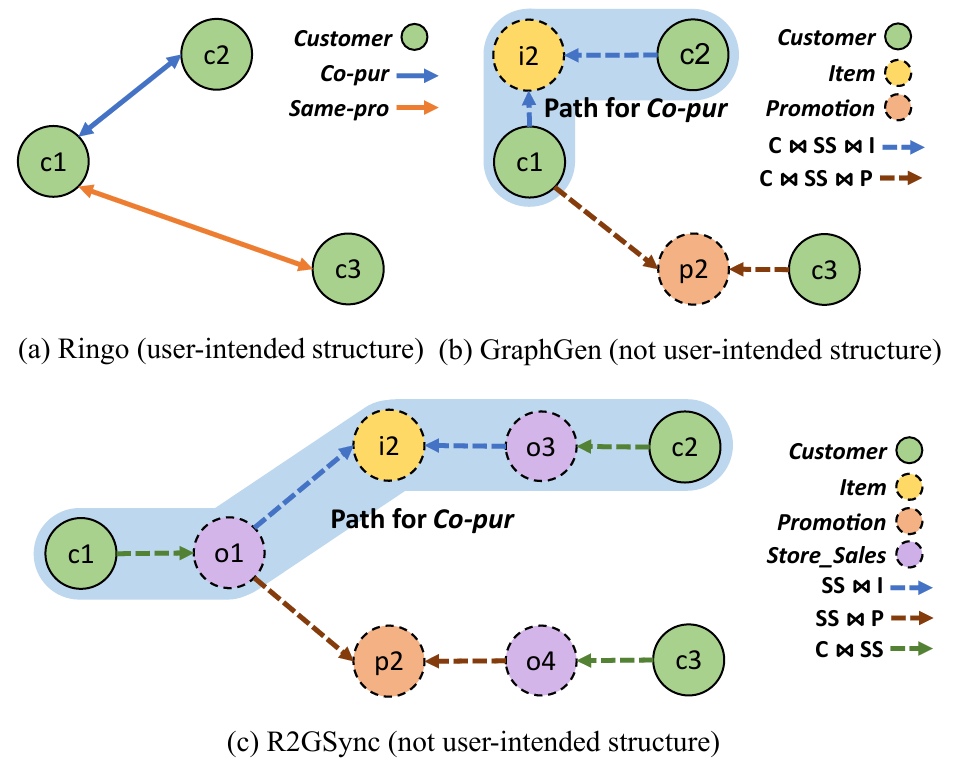}}
    \vspace*{-0.4cm}
    \caption{Examples of join workload-based graph extraction from Figure~\ref{fig:real_world_example}. It only includes the $Co\text -pur$ and $Same\text -pro$ edges for simplicity. The dotted vertices and edges are \textit{virtual vertices and edges} respectively.}
    \label{fig:join_based_extraction_example}
    \vspace*{-0.3cm}
\end{figure}

Each path for $Co\text -pur$ (or $Same\text -pro$) extracted by GraphGen and R2GSync is obviously different from an edge of a user-intended graph extracted by Ringo.
Unless the result graph is converted into a user-intended graph, graph analytics queries (or algorithms) must be modified such that they perform a multi-hop traversal in the result graph (e.g., c1-o1-i2-o3-c2 in Figure~\ref{fig:join_based_extraction_example}(c)) for each single-hop traversal in a user-intended graph (e.g., c1-c2 in Figure~\ref{fig:join_based_extraction_example}(a)).
Nevertheless, converting the resulting graph into a user-intended graph usually requires a substantial amount of time.

Furthermore, some types of edges cannot be extracted by GraphGen and R2GSync.
Both methods support only the edge that can be defined by a \textit{chain} join query like the below one in the Datalog form.
Such edges can be expressed as a path of virtual edges, as shown in Figures~\ref{fig:join_based_extraction_example}(b) and (c).
\begin{lstlisting}[mathescape, style=short_code]
$\displaystyle Q(src, dst):- R_1(src, a_1), R_2(a_1, a_2), R_3(a_2, a_3) \cdots R_n(a_{n-1}, dst)$
\end{lstlisting}
However, they do not support the edges that can be defined by a \textit{star} or \textit{cyclic} queries\cite{anzum2021r2gsync, xirogiannopoulos2017extracting}.
For example, $Get\text- disc$ in Figure~\ref{fig:real_world_example}(b) is an edge defined by a cyclic query as below, where the join among $SS$, $P$, and $I$ tables is cyclic.
\begin{lstlisting}[mathescape, style=short_code]
$\displaystyle Get \text-disc(c\textunderscore id, i\textunderscore no):- $
$C(c\textunderscore id), SS(c\textunderscore id, p\textunderscore no, i\textunderscore no), P(p\textunderscore no, i\textunderscore no), I(i\textunderscore no)$
\end{lstlisting}
GraphGen and R2GSync do not support such edges, while Ringo and ExtGraph do support them.
Table~\ref{tab:graph_model_capability_comparison} summarizes the features of the existing join workload-based methods and our ExtGraph. Ringo has a relatively high cost for graph extraction, but supports extracting a user-intended graph as ExtGraph does. GraphGen and R2GSync have an additional cost for converting a result graph into the corresponding user-intended graph. ExtGraph will be explained in Sections~\ref{sec:system_overview} and \ref{sec:join_sharing}.

\begin{table}[htbp]
\vspace*{-0.1cm}
\caption{Comparison between the existing join workload-based methods and ExtGraph.}
\label{tab:graph_model_capability_comparison}
\vspace*{-0.2cm}
\centering
\begin{tabular}{|c|c|c|c|}
\hline
 & \begin{tabular}[c]{@{}c@{}}Extraction\\ Cost\end{tabular}& \begin{tabular}[c]{@{}c@{}}User-Intended\\ Graph\end{tabular} & \begin{tabular}[c]{@{}c@{}}Converting\\ Cost\end{tabular}\\ \hline
Ringo~\cite{perez2015ringo}     & High   & Yes & N/A \\ \hline
GraphGen~\cite{xirogiannopoulos2015graphgen, xirogiannopoulos2017extracting}  & Low    & No  & High \\ \hline
R2GSync~\cite{anzum2019graphwrangler, anzum2021r2gsync}   & Low    & No  & High \\ \hline
ExtGraph  & Medium & Yes & N/A \\ \hline
\end{tabular}
\vspace*{-0.4cm}
\end{table}

\section{Overview of ExtGraph}
\label{sec:system_overview}

We enhance the process of graph extraction by introducing the optimization using join sharing. Definition~\ref{def:graph_extraction_using_join_sharing} shows the enhanced process, where the new step (Step 2) analyzes the queries defined in Step 1 and optimize the edge definitions using our join sharing techniques (detailed in Section~\ref{sec:join_sharing}).

\begin{definition}[\textbf{Graph extraction using join sharing}]
\label{def:graph_extraction_using_join_sharing}
Given a relational database instance $D$, the process of graph extraction using join sharing follows the four steps: 
\begin{enumerate}
    \item Define a graph model $M$ for $D$ according to Definition~\ref{def:graph_model_from_rdb}.
    \item \textbf{Optimize edge definitions using join sharing techniques.}
    \item Extract a set of vertices and a set of edges from $D$
    \item Convert the extracted vertices and edges into a graph.
\end{enumerate}
\end{definition}

Listing~\ref{fig:graph_model_example} shows an example of the graph model definition for Figure~\ref{fig:real_world_example}(a).
The name of the graph extracted is \textit{RetailG}, which has two types of vertices with the labels $Customer$ and $Item$, and two types of edges with the labels $Get\text-disc$ and $Co\text-pur$.
We omit some vertices and edges in Figure~\ref{fig:real_world_example}(b) due to lack of space. 
In this example, the $Item$ vertex has $name$ and $price$ as properties.
$i\textunderscore no$, PK of the $Item$ table, is also extracted with these properties and used as identifier of the $Item$ vertex.
ExtGraph does not restrict the form of the join queries that define the edges. 
Therefore, it is possible to define edges from star or cyclic queries as well as chain queries. 
In this example, the $Get\text-Disc$ edge is defined from a cyclic query, and the $Co\text-pur$ edge is defined from a chain query. 

\setlength{\textfloatsep}{0.2cm}
\begin{lstfloat}
\vspace*{-0.3cm}
\begin{spacing}{1.1}
\begin{lstlisting}[mathescape, escapeinside={/@}{@/}, style=long_code, caption={Example of a graph model definition.}, label={fig:graph_model_example}]
$\textbf{CREATE GRAPH}$($\textbf{{Graph\textunderscore Name}}$: RetailG);/@\begin{spacing}{0.4}\ \end{spacing}@/
$\textbf{CREATE VERTEX}$($\textbf{{Graph\textunderscore Name}}$: RetailG, 
   $\textbf{{Label}}$: Customer, $\textbf{{ID\textunderscore Column}}$: c_id, 
   $\textbf{{Query}}$: SELECT name from C);/@\begin{spacing}{0.4}\ \end{spacing}@/
$\textbf{CREATE VERTEX}$($\textbf{{Graph\textunderscore Name}}$: RetailG, 
   $\textbf{{Label}}$: Item, $\textbf{{ID\textunderscore Column}}$: i_no, 
   $\textbf{{Query}}$: SELECT name, price from I);/@\begin{spacing}{0.4}\ \end{spacing}@/
$\textbf{CREATE EDGE}$($\textbf{{Graph\textunderscore Name}}$: RetailG, $\textbf{{Label}}$: GetDisc, 
   $\textbf{{Src\textunderscore Label}}$: Customer, $\textbf{{Dst\textunderscore Label}}$: Item, 
   $\textbf{{Query}}$: SELECT null FROM C, SS, P, I 
   WHERE C.c_id=SS.c_id AND I.i_no=SS.i_no 
    AND P.p_no=SS.p_no AND I.i_no=P.i_no);/@\begin{spacing}{0.4}\ \end{spacing}@/
$\textbf{CREATE EDGE}$($\textbf{{Graph\textunderscore Name}}$: RetailG, $\textbf{{Label}}$: CoPur, 
   $\textbf{{Src\textunderscore Label}}$: Customer, $\textbf{{Dst\textunderscore Label}}$: Customer, 
   $\textbf{{Query}}$: SELECT null FROM C1, SS1, I, SS2, C2 
    WHERE C1.c_id=SS1.c_id AND I.i_no=SS1.i_no 
    AND C2.c_id=SS2.c_id AND I.i_no=SS2.i_no);
\end{lstlisting}
\end{spacing}
\vspace*{-0.5cm}
\end{lstfloat}

Figure~\ref{fig:Extracted_Graph_Example} shows the result of the graph extraction with the graph model in Listing~\ref{fig:graph_model_example}. 
As defined in Listing~\ref{fig:graph_model_example}, each tuple in the $Item$ table and the $Customer$ table is extracted as a vertex, and the relationship $Get\text-Disc$ and the $Co\text-pur$ are extracted as edges.
Graph data is more suitable than relational data for exploring relationships between entities, because graph data materializes relationships as edges.
For example, join operations are necessary to find all items that the customer $c1$ purchased with the promotion in the relational data in Figure~\ref{fig:Extracted_Graph_Example}(a).
With the graph in Figure~\ref{fig:Extracted_Graph_Example}(b), finding those items can be done by traversing $Get\text-Disc$ edges, which does not require expensive join operations.

\begin{figure}[htbp]
    \vspace*{-0.3cm}
    \centerline{\includegraphics[width=\linewidth]{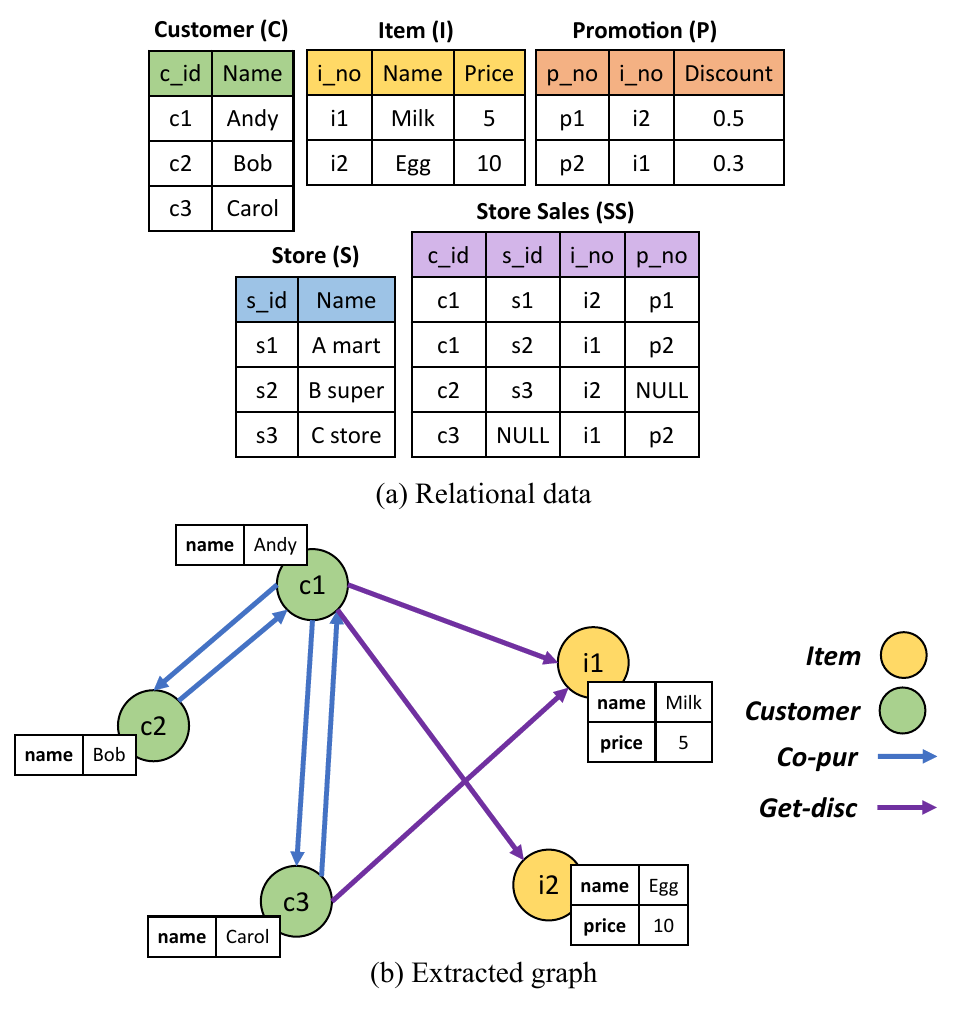}}
    \vspace*{-0.3cm}
    \caption{Example of the extracted graph by graph model in Listing~\ref{fig:graph_model_example}.}
    \label{fig:Extracted_Graph_Example}
    \vspace*{-0.3cm}
\end{figure}
\section{Join Sharing Optimization}
\label{sec:join_sharing}

Our join sharing techniques reduce the redundant sub-queries existing in the queries that define a graph model by using either outer joins or materialized views in order to improve the graph extraction performance.  
We call the technique using outer joins \textit{JS-OJ} and the technique using materialized views \textit{JS-MV}. 
We provide some intuitive explanations for both techniques as below.

Figure~\ref{fig:JS_OJ_motivation}(a) shows the join graphs of the queries that define the $Sell$ and $Buy$ edges in Figure~\ref{fig:real_world_example}(b), and Figure~\ref{fig:JS_OJ_motivation}(b) shows the join graph of the query created by applying JS-OJ to these two queries. 
We note that the original queries only contain the inner joins, while the merged query by JS-OJ contains not only an inner join of $SS \Join I$ but also two outer joins of $SS \leftouterjoin S$ and $SS \leftouterjoin C$.
That is, JS-OJ eliminates a single redundant inner join $SS \Join I$ from the original queries by converting two inner joins ($SS \Join S$, $SS \Join C$) in the original queries to two corresponding outer joins ($SS \leftouterjoin S$, $SS \leftouterjoin C$) in the merged query.
JS-OJ choose the table included in common joins as an outer table.
In this example, $SS$ is the outer table because it is included in common join $SS \Join I$.
Since outer join does not filter out tuples of the outer table, JS-OJ can merge two queries while retrieving the same results as the original queries.
Figure~\ref{fig:JS_OJ_motivation}(c) shows the performance comparison of these queries on the TPC-DS SF=10 database. 
This result shows that the merged query significantly improves the graph extraction performance by using outer joins.

\begin{figure}[htbp]
    \vspace*{-0.3cm}
    \centerline{\includegraphics[width=0.95\linewidth]{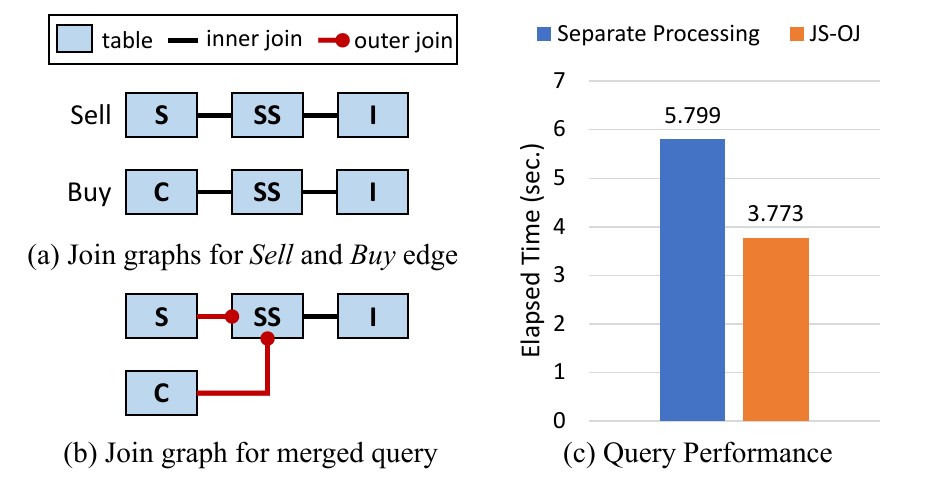}}
    \vspace*{-0.3cm}
    \caption{Example of JS-OJ for $Sell$ and $Buy$ edge in Figure~\ref{fig:real_world_example}(b). For each outer join, the outer table is marked with a circle.}
    \label{fig:JS_OJ_motivation}
    \vspace*{-0.3cm}
\end{figure}

Figure~\ref{fig:JS_MV_motivation}(a) shows the join graphs of the queries that define the $Co\text -pur$ and $Same\text -pro$ edges in Figure~\ref{fig:real_world_example}(b), and Figure~\ref{fig:JS_MV_motivation}(b) shows the join graphs of the queries created by applying JS-MV to these two queries. 
In this case, executing the two queries individually would result in redundant joins because both original queries contain the same join between $C$ and $SS$ four times. 
The JS-MV technique executes the common join $C \Join SS$ is executed first and stores the result in the materialized view $V$. 
Then, it executes the query of $V_1 \Join I \Join V_2$ and the query $V_1 \Join P \Join V_2$ by reusing $V$, which produces the same result with the original queries.
Figure~\ref{fig:JS_MV_motivation}(c) shows the performance comparison of these queries on the TPC-DS SF=10 database. 
The results show that using materialized views significantly improves the overall query processing performance.

\begin{figure}[htbp]
    \vspace*{-0.3cm}
    \centerline{\includegraphics[width=0.95\linewidth]{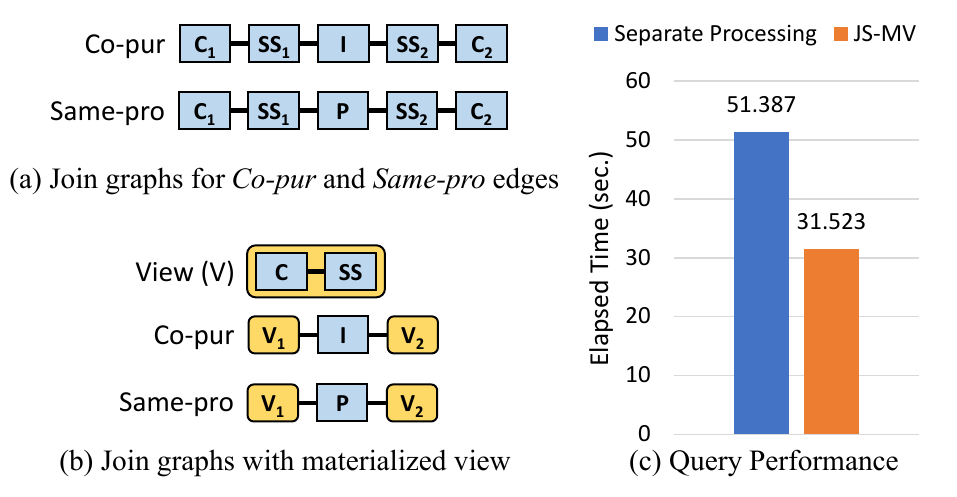}}
    \vspace*{-0.3cm}
    \caption{Example of JS-MV for the join queries for $Co\text -pur$ and $Same\text -pro$ edge from Figure~\ref{fig:real_world_example}(b).}
    \label{fig:JS_MV_motivation}
    \vspace*{-0.3cm}
\end{figure}

When applying join sharing to queries that define a graph, we treat each join query as a join graph. 
The definition of a join graph is as follows.

\begin{definition}[\textbf{Join graph}]
\label{def:join_graph}
A join graph $\mathtt{G = (V, E,f(e \in E),}$ $\mathtt{ g(e \in E))}$ for a join query $Q$ is an undirected multigraph, where $\mathtt{v \in V}$  indicates a table to be joined in Q, and an $\mathtt{e = (}X, Y\mathtt{) \in E}$ indicates a join between two tables $X$ and $Y$. 
There are two labeling functions, $\mathtt{f(e)}$ and $\mathtt{g(e)}$. The $\mathtt{f(e)}$ indicates the type of join, i.e., inner join or outer join, and the $\mathtt{g(e)}$ indicates the join condition for that join, such as $X.i=Y.j$ and $X.i<Y.j$.
\end{definition}

\subsection{Join Sharing by Outer Join (JS-OJ)}
\label{sec:join_sharing_by_outer_join}

To apply JS-OJ to any two queries that contain common joins, we decompose the join graphs of each query $Q_a$ and $Q_b$ based on the common joins. 
We call the subgraph consisting of common joins the \textit{shared subgraph}. 
Its definition is as follows.

\begin{definition}[\textbf{Shared subgraph}]
\label{def:shared_joingraph}
For join queries $Q_a$, $Q_b$ and their join graphs $\mathtt{G_a = (V_a, E_a, f_a(e), g_a(e))}$, $\mathtt{G_b = (V_b, E_b, f_b(e),}$ \linebreak $\mathtt{ g_b(e))}$, A shared subgraph $\mathtt{S = (V_S, E_S), S \subseteq G_a, S \subseteq G_b}$ is a connected component of common joins. 
Thus, For all $\mathtt{e_s \in E_S}$, it satisfies $\mathtt{f_a(e_s) = f_b(e_s)}$ and $\mathtt{g_a(e_s) = g_b(e_s)}$.
\end{definition}

If the join graphs $\mathtt{G_a}$ and $\mathtt{G_b}$ of $Q_a$ and $Q_b$, respectively, contain shared subgraphs, then we can decompose $\mathtt{G_a}$ and $\mathtt{G_b}$ into a single shared subgraph, $S$, and a set of non-shared subgraphs which are disjoint with each other in terms of vertices. 
Here, no edges are connected between the non-shared subgraphs, and each non-shared subgraph has at least one edge connected to a shared subgraph. 
Without loss of generality, there can be multiple possible decompositions for the same queries.
We denote a set of possible decompositions as $\left\{\mathcal{D}_i\right\}$. 
Figure~\ref{fig:JS_OJ_decomposition_example}(a) and (b) show two possible decompositions for given queries $Q_1$ and $Q_2$, where $\mathtt{S}$ indicates a shared subgraph, and $\mathtt{U_1 = \left\{u_{1, i}\right\}}$ and $\mathtt{U_2 = \left\{u_{2, i}\right\}}$ indicate a set of non-shared subgraphs.
Since both queries contain $B \Join D$, we decompose the corresponding subgraph into the shared subgraph. 
Note that we can get different decompositions depending on which subgraph is regarded as a shared subgraph. 
In both cases, the decomposition of $Q_2$ is the same: $\mathtt{S} = \left\{B, D\right\}$, $\mathtt{u_{2,1}}=\left\{E\right\}$, but there are multiple choices for decomposition of $Q_1$. 
For $Q_1$ in $\mathcal{D}_1$, there are a shared subgraph $\mathtt{S} = \left\{B_1, D\right\}$, and three non-shared subgraphs $\mathtt{u_{1,1}}=\left\{A\right\}$, $\mathtt{u_{1,2}}=\left\{C\right\}$, $\mathtt{u_{1, 3}}=\left\{B_2\right\}$.
On the other hand, there are a shared subgraph $\mathtt{S} = \left\{B_2, D\right\}$ and a non-shared subgraph $\mathtt{u_{1,1}}=\left\{A, B_1, C\right\}$ in $\mathcal{D}_2$.

\begin{figure}[htbp]
    \vspace*{-0.4cm}
    \centerline{\includegraphics[width=0.95\linewidth]{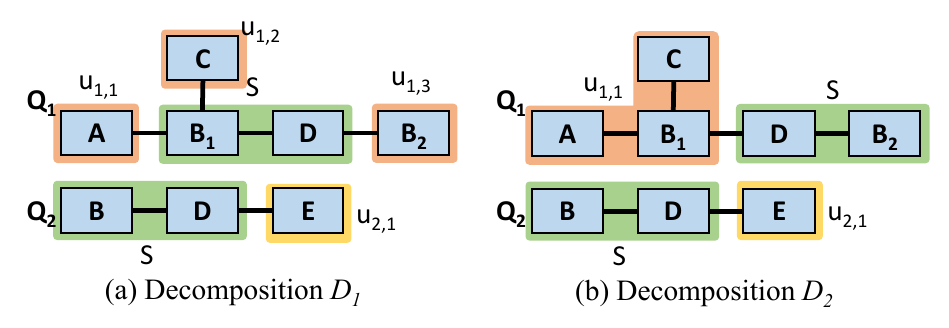}}
    \vspace*{-0.4cm}
    \caption{Examples of query decomposition for JS-OJ.}
    \label{fig:JS_OJ_decomposition_example}
    \vspace*{-0.4cm}
\end{figure}

Algorithm~\ref{alg:JS-OJ} describes how to apply JS-OJ to two queries containing common joins. 
The intuition behind this method is connecting shared subgraphs and non-shared subgraphs with outer joins to form a single join graph. 
First, find all possible decompositions $\left\{\mathcal{D}_i\right\}$ for the two join graphs as shown in Figure~\ref{fig:JS_OJ_decomposition_example} (Line \ref{alg:JS-OJ:decomposition_enumeration}). 
ExtGraph finds the shared subgraphs of the two queries by exhaustive search.
Since the join graph is usually small, the cost of this exhaustive search is trivial.
Then, it finds the combined join graph $\mathtt{G}_{\mathtt{M}i}$ for each partition as shown in Figure~\ref{fig:JS_OJ_result}, and finds the join graph with the cheapest cost among them (Lines \ref{alg:JS-OJ:for_decomposition_begin}-\ref{alg:JS-OJ:for_decomposition_end}). 
The cost model will be discussed in detail in Section~\ref{sec:cost_model}. 
When connecting a shared subgraph and non-shared subgraphs (Lines \ref{alg:JS-OJ:connecting_graph_begin}-\ref{alg:JS-OJ:connecting_graph_end}), convert the edges to outer joins. 
Note that connecting two join queries with an inner join can lead to incorrect join results because of interference between the join operations.
For example, in Figure~\ref{fig:JS_OJ_result}(b), if you connect each subgraph with inner joins, $B_1 \Join D$ will filter the tuples in $D$, which will affect the join result of $E \Join D$.
However, since the two joins are originally independent operations, there should be no interference.
By using outer joins, two queries can be combined into a single query without interference because they do not filter tuples from the outer table, even if there are no matching tuples.

\begin{figure}[htbp]
    \vspace*{-0.3cm}
    \centerline{\includegraphics[width=0.95\linewidth]{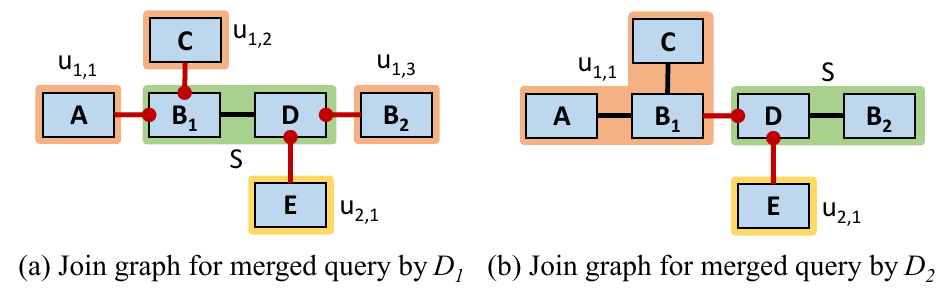}}
    \vspace*{-0.3cm}
    \caption{The resulting queries of JS-OJ for Figure~\ref{fig:JS_OJ_decomposition_example}.}
    \label{fig:JS_OJ_result}
    \vspace*{-0.4cm}
\end{figure}

\setlength{\textfloatsep}{0.2cm}
\begin{algorithm}
\SetAlgoLined
\SetArgSty{textnormal}

\KwIn{two join graphs of given query $\left\{ \mathtt{G_1, G_2} \right\}$}
\KwOut{join graph of merged query $\mathtt{G_M*}$}

$\left\{\mathcal{D}_i\right\} \leftarrow$ find possible decompositions of $\left\{ \mathtt{G_1, G_2} \right\}$; \\
\label{alg:JS-OJ:decomposition_enumeration}
\ForEach{$\mathcal{D}_i \in \left\{ \mathcal{D}_i \right\}$}{
\label{alg:JS-OJ:for_decomposition_begin}
    $\mathtt{S} \leftarrow$ shared subgraph of $\mathtt{G_1, G_2}$ in $\mathcal{D}_i$; \\
    $\mathtt{U_1} \leftarrow$ a set of non-shared subgraphs of $\mathtt{G_1}$ in $\mathcal{D}_i$; \\
    $\mathtt{U_2} \leftarrow$ a set of non-shared subgraphs of $\mathtt{G_2}$ in $\mathcal{D}_i$; \\
    $\mathtt{G}_{\mathtt{M}i} \leftarrow$ copy $\mathtt{S}$ \\
    \ForEach{$\mathtt{u_{1,j}} \in \mathtt{U_1}$}{
    \label{alg:JS-OJ:connecting_graph_begin}
        $\left\{ \mathtt{e_k} \right\} \leftarrow$ find edges of $\mathtt{G_1}$ \\
        \ \ \ \ \ \ \ \ \ \ which connect $\mathtt{S}$ and $\mathtt{u_{1,j}}$;\\
        add all vertices and edges in $\mathtt{u_{1,j}}$ into $\mathtt{G}_{\mathtt{M}i}$;\\
        \ForEach{$\mathtt{e_k} \in \left\{ \mathtt{e_k} \right\}$}{
            mark $\mathtt{e_k}$ as outer join; \\
            add $\mathtt{e_k}$ into $\mathtt{G}_{\mathtt{M}i}$; \\
        }
    }
    \ForEach{$\mathtt{u_{2,j}} \in \mathtt{U_2}$}{
        $\left\{ \mathtt{e_k} \right\} \leftarrow$ find edges of $\mathtt{G_2}$ \\
        \ \ \ \ \ \ \ \ \ \ which connect $\mathtt{S}$ and $\mathtt{u_{2,j}}$;\\
        add all vertices and edges in $\mathtt{u_{2,j}}$ into $\mathtt{G}_{\mathtt{M}i}$;\\
        \ForEach{$\mathtt{e_k} \in \left\{ \mathtt{e_k} \right\}$}{
            mark $\mathtt{e_k}$ as outer join; \\
            add $\mathtt{e_k}$ into $\mathtt{G}_{\mathtt{M}i}$; \\
        }
    }
    \label{alg:JS-OJ:connecting_graph_end}
}
\label{alg:JS-OJ:for_decomposition_end}
$\mathtt{G_M*} \leftarrow$ pick the best plan among $\left\{ \mathtt{G}_{\mathtt{M}i} \right\}$; \\
\Return{$\mathtt{G_M*}$};
\caption{\textsc{JS-OJ}}
\label{alg:JS-OJ}
\end{algorithm}

Theorem~\ref{th:JS_OJ_validation} presents the validity of JS-OJ query result.
The result of query $\mathtt{G_M*}$ includes the results of both the original queries $\mathtt{G_1}$ and $\mathtt{G_2}$. 
All tuples in the results of $\mathtt{G_M*}$ are included in the results of $\mathtt{G_1}$ and $\mathtt{G_2}$, except for those tuples that contain $null$ values due to outer joins. 

\begin{theorem}
\label{th:JS_OJ_validation}
The result of the original queries can be retrieved without loss or error by performing $\mathtt{G_M*}$.
\end{theorem}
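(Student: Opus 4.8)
The plan is to exhibit a fixed \emph{recovery operator} $\rho_1$ (and symmetrically $\rho_2$) such that $\rho_1(\Omega)=Q_1$ and $\rho_2(\Omega)=Q_2$, where $\Omega$ denotes the result of $\mathtt{G_M*}$; this immediately yields both ``no loss'' ($Q_1\subseteq\rho_1(\Omega)$) and ``no error'' ($\rho_1(\Omega)\subseteq Q_1$). Fix any valid decomposition $\mathcal{D}_i$ behind $\mathtt{G_M*}$: a shared subgraph $\mathtt{S}$, the non-shared subgraphs $\mathtt{u_{1,1}},\dots,\mathtt{u_{1,p}}$ of $\mathtt{G_1}$ and $\mathtt{u_{2,1}},\dots,\mathtt{u_{2,q}}$ of $\mathtt{G_2}$. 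Write $R_{\mathtt S}$ for the result of evaluating $\mathtt{S}$, $R_{k,j}$ for the result of evaluating the subquery $\mathtt{u_{k,j}}$ in isolation, and $\theta_{k,j}$ for the conjunction of the join conditions on the edges of $\mathtt{G_k}$ that connect $\mathtt{S}$ and $\mathtt{u_{k,j}}$. The first step is to record the structural fact guaranteed by Definition~\ref{def:shared_joingraph} and the stated properties of the decomposition: the non-shared subgraphs are pairwise vertex-disjoint and no edge joins two of them, so each $\theta_{k,j}$ mentions only attributes of $\mathtt{S}$ and of $\mathtt{u_{k,j}}$. Hence $Q_1 = R_{\mathtt S}\Join_{\theta_{1,1}}R_{1,1}\Join\cdots\Join_{\theta_{1,p}}R_{1,p}$ (and likewise for $Q_2$), and — by the outer-table choice made in Algorithm~\ref{alg:JS-OJ}, which keeps $\mathtt{S}$ as the preserved side — $\mathtt{G_M*}$ corresponds to the chain $R_{\mathtt S}\leftouterjoin_{\theta_{1,1}}R_{1,1}\leftouterjoin\cdots\leftouterjoin_{\theta_{2,q}}R_{2,q}$ of left outer joins all attached to the accumulating left operand.

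The second step is the key lemma that untangles this chain. Because each attaching condition $\theta_{k,j}$ refers to no attribute outside $\mathtt{S}\cup\mathtt{u_{k,j}}$, two successive left outer joins onto a common left operand commute: $(X\leftouterjoin_{\theta}Y)\leftouterjoin_{\theta'}Z = (X\leftouterjoin_{\theta'}Z)\leftouterjoin_{\theta}Y$ whenever $\theta'$ does not mention attributes of $Y$, which I would prove by directly unfolding the outer-join definition. Iterating it yields an explicit description of $\Omega$: for each $s\in R_{\mathtt S}$ and each choice, for every pair $(k,j)$, of a tuple $t_{k,j}\in R_{k,j}$ satisfying $\theta_{k,j}(s,t_{k,j})$ — or of an all-$null$ tuple when no such $t_{k,j}$ exists — $\Omega$ contains the concatenation of $s$ with all the $t_{k,j}$; moreover the columns originating in $\mathtt{S}$ never hold $null$. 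This is precisely the rigorous version of the ``no interference between the join operations'' argument given informally before Algorithm~\ref{alg:JS-OJ} (the $B_1\Join D$ versus $E\Join D$ discussion): since the outer joins never filter $R_{\mathtt S}$, the per-query joins remain mutually independent.

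The third step uses this description. Define $\rho_1$: restrict $\Omega$ to the columns belonging to $\mathtt{S}$ and $\mathtt{u_{1,1}},\dots,\mathtt{u_{1,p}}$, discard every tuple carrying a $null$ in a column contributed by some $\mathtt{u_{1,j}}$, then apply $Q_1$'s own selection/projection. For $\rho_1(\Omega)\subseteq Q_1$: a surviving tuple has $s\in R_{\mathtt S}$ and, for each $j$, a genuine $t_{1,j}\in R_{1,j}$ with $\theta_{1,j}(s,t_{1,j})$, hence lies in $Q_1$. For $Q_1\subseteq\rho_1(\Omega)$: given $(s,t_{1,1},\dots,t_{1,p})\in Q_1$, each match set $\{t\in R_{1,j}:\theta_{1,j}(s,t)\}$ is nonempty (it contains $t_{1,j}$), so by the description $\Omega$ has a tuple whose $\mathtt{S}$-part is $s$ and whose $\mathtt{u_{1,j}}$-part is $t_{1,j}$ for all $j$ (the $\mathtt{u_{2,\ast}}$-parts being arbitrary matches or $null$-padding); this tuple survives the $null$-filter and projects to the given $Q_1$-tuple. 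Symmetrically $\rho_2(\Omega)=Q_2$, which proves the statement. The main obstacle is exactly the commutativity lemma and its iteration in step two — keeping the bookkeeping of outer-join padding straight and invoking the disjointness/non-adjacency of the non-shared subgraphs so that the attaching conditions provably never reference one another's attributes. Two minor points to flag: extra $\mathtt{u_{2,\ast}}$-matches can inflate the multiplicity of a $Q_1$-tuple inside $\Omega$, so the statement is read (as the paper intends) at the level of result sets, or $\rho_1$ is taken to include a final duplicate elimination; and the $null$-test in $\rho_1$ faithfully detects outer-join padding under the usual assumption here that the involved columns (typically key columns) carry no genuine $null$s.
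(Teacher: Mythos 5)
Your proposal is correct and follows essentially the same route as the paper's proof: both decompose $\mathtt{G_M*}$ into the shared subgraph $\mathtt{S}$ plus independent non-shared subgraphs, use the fact that left outer joins preserve $\mathtt{S}$ and commute because the attaching conditions reference no attributes outside $\mathtt{S}$ and their own subgraph, and recover each $Q_k$ by discarding the $null$-padded tuples. Your version is simply a more rigorous, tuple-level rendering of that argument (the recovery operators, the explicit description of $\Omega$, and the caveats about multiplicity and genuine $null$s make precise what the paper states informally), but it introduces no new idea beyond the paper's.
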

\begin{proof}
First, if we represent $\mathtt{G_1}$, $\mathtt{G_2}$, and $\mathtt{G_M*}$ with a shared subgraph $\mathtt{S}$ and non-shared subgraphs $\mathtt{U_1} = \left\{\mathtt{u_{1,j}}\right\}$, $\mathtt{U_2} = \left\{\mathtt{u_{2,j}}\right\}$, then we have the following relational algebra expressions.
\setlength{\textfloatsep}{0cm}
\begin{equation*}
\begin{aligned}
&\mathtt{G_1} = \mathtt{S \Join u_{1, 1} \cdots \Join u_{1, a}}, \mathtt{G_2} = \mathtt{S \Join u_{2, 1} \cdots \Join u_{2, b}} \\
&\mathtt{G_M*} = \mathtt{S \leftouterjoin u_{1, 1} \cdots \leftouterjoin u_{1, a}}\ \mathtt{\leftouterjoin u_{2, 1} \cdots \leftouterjoin u_{2, b}}
\end{aligned}
\end{equation*}

Note that the outer join does not filter the tuples in the outer table, so the following equation holds. 
\setlength{\textfloatsep}{0cm}
\begin{equation*}
\begin{aligned}
\mathtt{S \leftouterjoin u_{1, 1} \cdots \leftouterjoin u_{1, a}}\ \mathtt{\leftouterjoin u_{2, 1} \cdots \leftouterjoin u_{2, b}}\supseteq \mathtt{S \leftouterjoin u_{1, 1} \cdots \leftouterjoin u_{1, a}}
\end{aligned}
\end{equation*}

Furthermore, the following equation holds since the outer join contains all the results of the inner join. 
Therefore, the result of $\mathtt{G_M*}$ contains all the results of $\mathtt{G_1}$.
\setlength{\textfloatsep}{0cm}
\begin{equation*}
\begin{aligned}
\mathtt{S \leftouterjoin u_{1, 1} \cdots \leftouterjoin u_{1, a}} \supseteq 
\mathtt{S \Join u_{1, 1} \cdots \Join u_{1, a}} = \mathtt{G_1}
\end{aligned}
\end{equation*}

Since there is no join operation between non-shared subgraphs, the outer joins of $\mathtt{G_M*}$ can be reordered with the same query result. 
In other words, the following equation holds, which implies that the result of $\mathtt{G_M*}$ contains all the results of $\mathtt{G_2}$.
\setlength{\textfloatsep}{0cm}
\begin{equation*}
\begin{aligned}
\mathtt{G_M*} = & \mathtt{S \leftouterjoin u_{1, 1} \cdots \leftouterjoin u_{1, a}} \ \mathtt{\leftouterjoin u_{2, 1} \cdots \leftouterjoin u_{2, b}} \\
= & \mathtt{S \leftouterjoin u_{2, 1} \cdots \leftouterjoin u_{2, b}} \ \mathtt{\leftouterjoin u_{1, 1} \cdots \leftouterjoin u_{1, a}} \\
\supseteq & \ \mathtt{S \leftouterjoin u_{2, 1} \cdots \leftouterjoin u_{2, b}} \supseteq \mathtt{S \Join u_{2, 1} \cdots \Join u_{2, b}} = \mathtt{G_2}
\end{aligned}
\end{equation*}

Furthermore, combining queries into a single query does not cause interference between non-shared subgraphs because all outer joins occur only between a shared subgraph and non-shared subgraphs, and outer tables are always included in the shared subgraph.
Therefore, a column corresponding to a non-shared subgraph can have a non-$null$ value only if there is a match in the shared subgraph. 
This implies that any tuple with non-$null$ values in the results of $\mathtt{G_M*}$ will also be matched and appear in the results of the original queries, ensuring that none of the results of $\mathtt{G_M*}$ will be invalid.
\end{proof}

\subsection{Join Sharing by Materialized View (JS-MV)}
\label{sec:join_sharing_by_materialized_view}

JS-OJ has the advantage of reducing redundant operations without materialization cost, but it may increase the complexity of the query by combining multiple queries into a single query.
Especially, if the combined single query contains \textit{N}-to-\textit{N} joins that the original queries do not have, it might take more processing time than the original queries.
For example, applying JS-OJ to the queries in Figure~\ref{fig:JS_MV_motivation}(a) results in a query like Figure~\ref{fig:JS_OJ_MV_Comparison}(a). 
We can see that the combined query involves the joins among the fact tables $SS_1$, $SS_2$, and $SS_3$.
Since $SS_1 \Join I \Join SS_2$ and $SS_1 \Join P \Join SS_3$ are already \textit{N}-to-\textit{N} joins, the processing cost of the combined query may significantly increase due to concatenation of those joins with $SS_1$.

\begin{figure}[htbp]
    \vspace*{-0.3cm}
    \centerline{\includegraphics[width=0.95\linewidth]{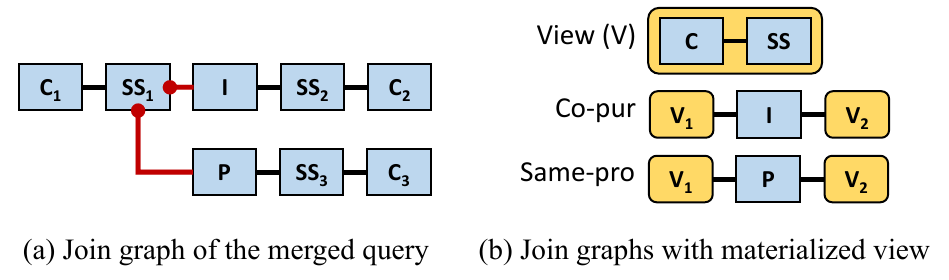}}
    \vspace*{-0.4cm}
    \caption{Comparison of JS-OJ and JS-MV for Figure~\ref{fig:JS_MV_motivation}}
    \label{fig:JS_OJ_MV_Comparison}
    \vspace*{-0.3cm}
\end{figure}

JS-MV incurs the materialization cost, but can reduce redundant operations without increasing the complexity of the queries. 
It stores the result of common joins as materialized views and transforms each query to use those views. 
Since the shared subgraphs represent the common joins across the given queries, the subqueries to be stored as materialized views are obtained via shared subgraphs described in Definition~\ref{def:shared_joingraph}.
The process of transforming given queries to utilize views is simply done by replacing the corresponding parts of the shared subgraph with vertices for the views.
For example, $Co\text -pur$ and $Same\text -pro$ in Figure~\ref{fig:JS_MV_motivation}(a) have a shared subgraph $\mathtt{S} = \left\{C, SS\right\}$ because they have $C \Join SS$ in common. 
Thus, if we store $\mathtt{S}$ as a materialized view $V$ and replace the corresponding parts of each query with $V$, we get the join graph shown in Figure~\ref{fig:JS_OJ_MV_Comparison}(b).

As mentioned in Section~\ref{sec:join_sharing_by_outer_join}, there can be multiple shared subgraphs for the same set of queries.
Depending on which of these shared subgraphs are stored as views and applied to the query, the overall query processing performance could be improved or, in some cases, worsened compared to the original queries.
This is because storing common joins as materialized views is an expensive operation that requires additional I/O.
If the subqueries stored in the view are frequently used in the given queries, the performance improvement from caching will outweigh the materialization cost. 
However, if they are not frequently used, the additional I/O cost will significantly degrade performance.
We use a cost model to determine which subqueries are most efficient to store as views and apply to the given queries. 

\section{Hybrid Optimization of Join Sharing}
\label{sec:cost_model}

We have discussed JS-OJ and JS-MV independently in Section~\ref{sec:join_sharing}, but they can be integrated to further enhance the performance of graph extraction.
For example, in Figure~\ref{fig:cost_model_example}(a), we have three join queries $\left\{Q_1, Q_2, Q_3\right\}$ to extract a graph, and they share two common joins($A \Join B$, $A \Join C$). 
Instead of exclusively applying either JS-OJ or JS-MV to the queries, we can find a more efficient plan for graph extraction by combining both JS-OJ and JS-MV, as in Figure~\ref{fig:cost_model_example}(b), which will be explained in detail in Section~\ref{sec:JS_plan_generation}.
We can integrate both techniques in a cost-based manner.
We describe the cost model of our base system in Section~\ref{sec:base_cost_model}, and the cost models of JS-OJ and JS-MV in Sections~\ref{sec:JS_OJ_cost_model} and \ref{sec:JS_MV_cost_model}, respectively. 
Subsequently, we present the technique for generating a hybrid plan of JS-OJ and JS-MV in Section~\ref{sec:JS_plan_generation}.
Table~\ref{tab:cost_model_symbol} summarizes the symbols used in Section~\ref{sec:cost_model}.

\begin{figure}[tbp]
    \centerline{\includegraphics[width=0.95\linewidth]{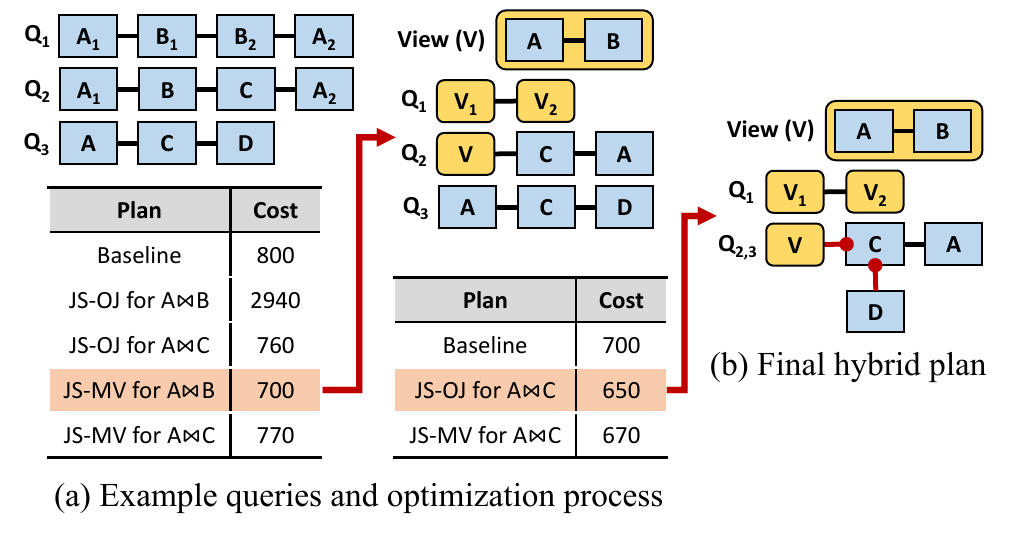}}
    \vspace*{-0.4cm}
    \caption{Example of a hybrid plan of JS-OJ and JS-MV.}
    \label{fig:cost_model_example}
\end{figure}

\begin{table}[htbp]
\vspace{-0.2cm}
\caption{Summary of symbols.}
\label{tab:cost_model_symbol}
\vspace{-0.2cm}
\begin{tabular}{l|l}
\hline
Symbol        & Description                                  \\ \hline
$Q_i$         & i-th join query for edge definition          \\ \hline
$T_j$         & j-th table in each join query                \\ \hline
$N_P(T)$      & number of disk pages containing data of $T$   \\ \hline
$SQ_{S}$      & subquery for common subgraph $\mathtt{S}$    \\ \hline
$SQ_i$        & subquery for i-th non-common subgraph        \\ \hline
$V_i$         & i-th view definition for JS-MV               \\ \hline
$A_D$         & the cost of accessing a disk page               \\ \hline
\end{tabular}
\vspace{0.5cm}
\end{table}

\subsection{Cost Model of Base System}
\label{sec:base_cost_model}
We consider that there are $N$ join queries for edge definitions of a graph model. 
Eq.~\ref{eq:base_system_cost} shows the processing cost of a query plan $P_{base}$.
\setlength{\textfloatsep}{0cm}
\begin{equation}
\label{eq:base_system_cost}
\begin{aligned}
Cost(P_{base})=\sum_{i=1}^{N}Join(Q_i)
\end{aligned}
\end{equation}
\vspace{-0.3cm}

We assume that only left-deep plan of binary joins and hash join algorithms are used for simplicity.
We also assume that the leftmost table $T_1$ is probed against the remaining tables $\left\{T_2, ... , T_M \right\}$, and the optimal join ordering is found by the base system (e.g., PostgreSQL).
Eq.~\ref{eq:hash_join_cost} shows the cost of each join query.
We omit the detail costs of $Build(T_i)$ and $Probe(T_1)$ for simplicity (see \cite{haas1997s, harris1996join} for details).
\setlength{\textfloatsep}{0cm}
\begin{equation}
\label{eq:hash_join_cost}
\begin{aligned}
Join(Q)=\sum_{i=2}^{M}Build(T_i)+Probe(T_1)
\end{aligned}
\end{equation}
\vspace{-0.4cm}

\subsection{Cost Model of JS-OJ}
\label{sec:JS_OJ_cost_model}

To process the combined query $Q_M$ generated by JS-OJ, we should compute subqueries for the shared subgraph $\mathtt{S}$ and the non-shared subgraphs $\mathtt{U = \left\{u_i\right\}}$, and then compute outer joins on the results of each subquery.
Eq.~\ref{eq:merged_query_cost} shows the processing cost of $Q_M$, where $Outer(O)$ is the cost of processing outer joins, i.e., combining the results of subqueries.
\setlength{\textfloatsep}{0cm}
\begin{equation}
\label{eq:merged_query_cost}
\begin{aligned}
Join(Q_M)=Join(SQ_S)+\sum_{i=1}^{|U|}Join(SQ_i)+Outer(O)
\end{aligned}
\end{equation}
\vspace{-0.3cm}

We assume that the outer join is still evaluated using hash join, similar to the base system, and  the result of the subquery for the shared subgraph is probed against the results of the remaining subqueries.
Thus, Eq.~\ref{eq:outer_join_cost} shows the cost of the outer joins.
\setlength{\textfloatsep}{0cm}
\begin{equation}
\label{eq:outer_join_cost}
\begin{aligned}
Outer(O)=\sum_{i=1}^{|U|}Build(SQ_i)+Probe(SQ_S)
\end{aligned}
\end{equation}
\vspace{-0.4cm}

\subsection{Cost Model of JS-MV}
\label{sec:JS_MV_cost_model}

When JS-MV is applied, the common joins are executed in advance and saved as materialized views.
These materialized views are then utilized during query processing, which plan is denoted as $P_{MV}$.
We assume that a total of $K$ materialized views are created.
Then, the cost of $P_{MV}$ (in Eq.~\ref{eq:JS_MV_cost}) consists of the cost of creating the views and the cost of executing a query $Q'_i$, which is transformed from $Q_i$ to utilize the views.
\setlength{\textfloatsep}{0cm}
\begin{equation}
\label{eq:JS_MV_cost}
\begin{aligned}
Cost(P_{MV})=&\sum_{i=1}^{K}(Join(V_i)+A_D \cdot N_P(V_i))+\sum_{i=1}^{N}Join(Q_i')
\end{aligned}
\end{equation}

If $V_i$ is small, it could potentially be stored only in memory without disk I/O. 
However, for simplicity, we assume that they are always stored on disk in Eq.~\ref{eq:JS_MV_cost}.

\subsection{Hybrid Plan of JS-OJ and JS-MV}
\label{sec:JS_plan_generation}

Algorithm~\ref{alg:JS_optimization} describes a heuristic method for generating a hybrid plan of JS-OJ and JS-MV similar to Figure~\ref{fig:cost_model_example}(b).
For a set of $N$ join queries for graph extraction (called the baseline plan, denoted as $P_base$), it identifies all possible plans by applying either JS-OJ or JS-MV as mentioned in Section~\ref{sec:join_sharing}(Line \ref{alg:JS_optimization:plan_enumeration}), and selects the best plan with the minimum cost (Line \ref{alg:JS_optimization:find_lowest_cost}), denoted as $P_{new}$.
If $P_{new}$ has a lower cost than the current plan $P*$, we replace it with $P_{new}$ (Line \ref{alg:JS_optimization:replace_plan_begin}-\ref{alg:JS_optimization:replace_plan_end}).
We repeat this process until no plan with a lower cost than $P*$ is found.

\vspace{-0.2cm}
\setlength{\textfloatsep}{0.3cm}
\begin{algorithm}
\SetAlgoLined
\SetArgSty{textnormal}
\KwIn{$N$ join queries $\left\{ Q_1, ... , Q_N \right\}$} 
\KwOut{optimal hybrid plan $P*$ for graph extraction} 
$P* \leftarrow \left\{ Q_i \right\}$; \tcp{\small $P*$ is initialized with the baseline plan} 
\label{alg:JS_optimization:baseline_plan}
\While{$true$}{
    $\left\{P_i\right\} \leftarrow $\text{find all plans of applying JS-OJ or JS-MV to $P*$};\\
    \label{alg:JS_optimization:plan_enumeration}
    $P_{new} \leftarrow$ pick the plan with the minimum cost in $\left\{P_i\right\}$ \\
    \label{alg:JS_optimization:find_lowest_cost}
    \If{$Cost(P_{new}) < Cost(P*)$}{
    \label{alg:JS_optimization:replace_plan_begin}
        $P* \leftarrow P_{new}$ \\
    }
    \Else{
        \textbf{break} \\
    }
\label{alg:JS_optimization:replace_plan_end}
}
\Return{$P*$}; 
\caption{\textsc{Hybrid Optimization of Join Sharing}}
\label{alg:JS_optimization}
\end{algorithm}
\vspace{-0.5cm}

For example, in Figure~\ref{fig:cost_model_example}(a), the baseline plan $P_{base}$($P*$ in Line~\ref{alg:JS_optimization:baseline_plan}) has a cost of $800$.
There are four possible plans of JS-OJ or JS-MV($\left\{P_i\right\}$ in Line~\ref{alg:JS_optimization:plan_enumeration}): JS-OJ for $A \Join B$ (cost: 2940), JS-OJ for $A \Join C$  (cost: 760), JS-MV for $A \Join B$ (cost: 700), and JS-MV for $A \Join C$ (cost: 770).
Since applynig JS-MV for $A \Join B$ has the minimum cost among them, it is selected as $P_{new}$(Line~\ref{alg:JS_optimization:find_lowest_cost}), and we replace the baseline with it(Line \ref{alg:JS_optimization:replace_plan_begin}-\ref{alg:JS_optimization:replace_plan_end}).
Similarly, in the next loop, there are two possible plans of JS-OJ or JS-MV($\left\{P_i\right\}$ in Line~\ref{alg:JS_optimization:plan_enumeration}) for the current plan $P*$. 
Applying JS-OJ for $A \Join C$(cost:650) has the minimum cost among them and replaces the baseline with it.
Since there is no possible plans that can obtain by applying JS-OJ or JS-MV, we return the resulting plan $P*$ as in Figure~\ref{fig:cost_model_example}(b).
\section{Experimental Evaluation}
\label{sec:experimental_evaluation}

\subsection{Experimental Setup}
\label{sec:experimental_setup}
\noindent\textbf{Dataset and Graph Definitions: }
For the experiments, we use the TPC-DS\cite{tpcds} dataset (SF 10, 30, 100) and real datasets DBLP\cite{dblp} and IMDB\cite{imdb}. 
Figure~\ref{fig:tpc_ds_graph_models} shows the graph models used in our experiments for TPC-DS, which are commonly used in other studies\cite{anzum2019graphwrangler, anzum2021r2gsync, xirogiannopoulos2017extracting}.
Figure~\ref{fig:tpc_ds_graph_models}(a) is a graph model for a recommendation system, which assumes finds the customers who bought the same product ($Co\text -pur$) and saw the same promotion ($Same\text -pro$) and recommends the products that they bought ($Buy$).
Figure~\ref{fig:tpc_ds_graph_models}(b) is a graph model for fraud detection, which finds a consumer who bought many products from the same store\cite{sahu2020ubiquity} by analyzing a relationship between stores and products ($Sell$) and a relationship between customers and products ($Buy$).
Note that TPC-DS is a benchmark that contains sales data from three channels: \textit{Store}, \textit{Catalog} and \textit{Web}. 
Therefore, we use a total of six graph models for experiments (two scenarios and three channels).

\begin{figure}[htbp]
    \vspace*{-0.2cm}
    \centerline{\includegraphics[width=0.9\linewidth]{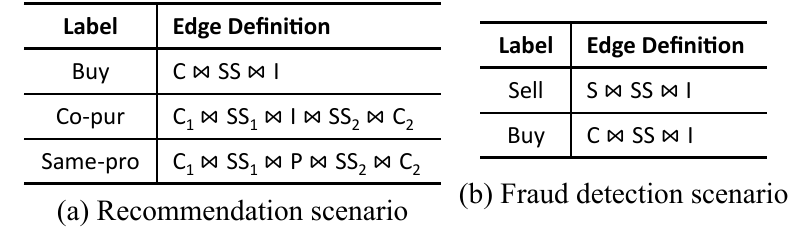}}
    \vspace*{-0.4cm}
    \caption{Graph models for \textit{Store} in TPC-DS (the ones for $Catalog$ and $Web$ are similar).}
    \label{fig:tpc_ds_graph_models}
    \vspace*{-0.3cm}
\end{figure}

\begin{figure}[htbp]
    \vspace*{-0.3cm}
    \centerline{\includegraphics[width=\linewidth]{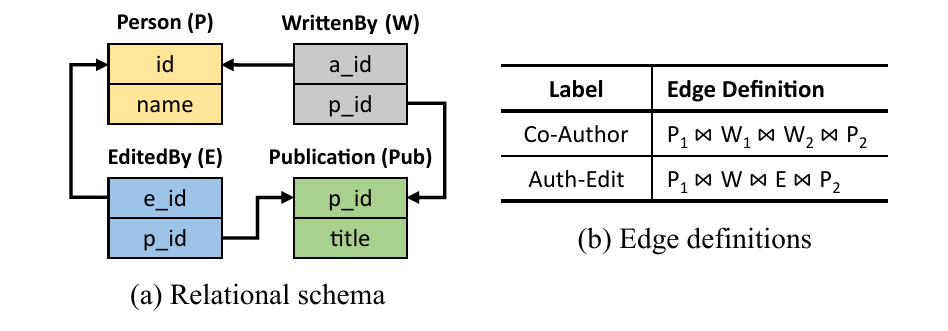}}
    \vspace*{-0.4cm}
    \caption{Graph model used for the DBLP dataset.}
    \label{fig:dblp_graph_models}
    \vspace*{-0.3cm}
\end{figure}

\begin{figure}[htbp]
    \vspace*{-0.3cm}
    \centerline{\includegraphics[width=\linewidth]{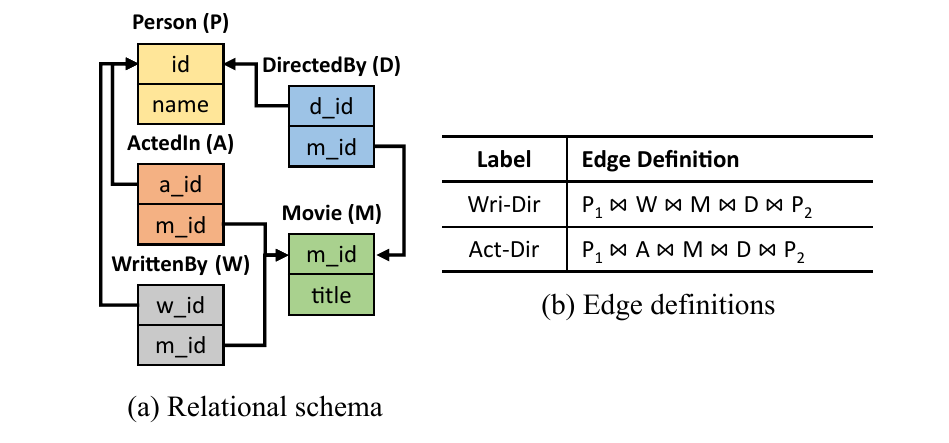}}
    \vspace*{-0.4cm}
    \caption{Graph model used for the IMDB dataset.}
    \label{fig:imdb_graph_models}
    \vspace*{-0.2cm}
\end{figure}

\begin{figure*}[htbp]
    \centerline{\includegraphics[width=\textwidth]{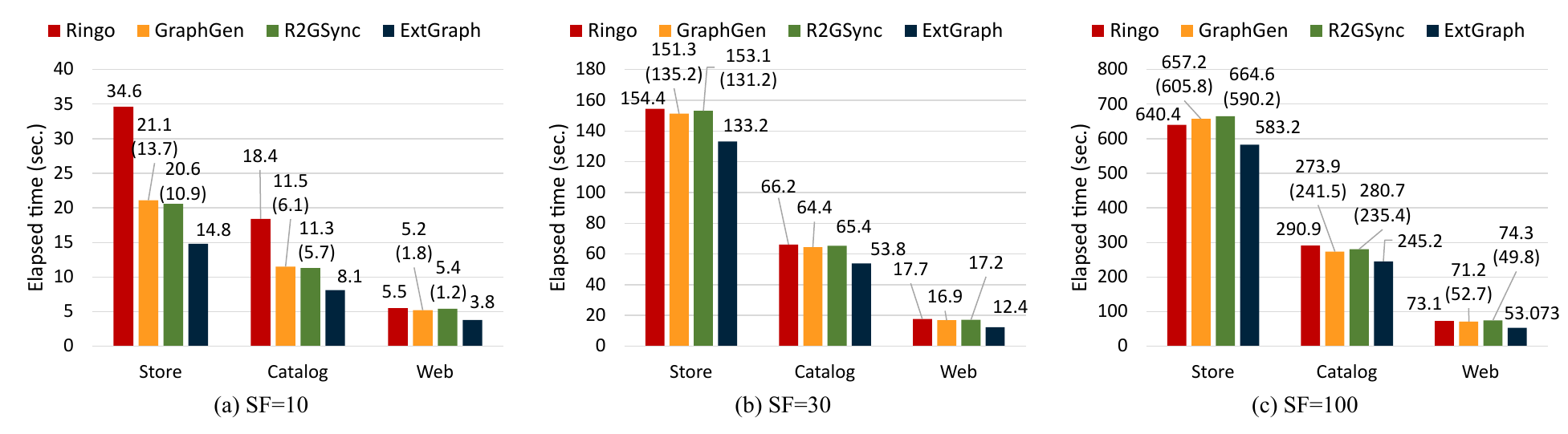}}
    \vspace*{-0.4cm}
    \caption{Performance comparison of extracting \textit{Store}, \textit{Catalog} and \textit{Web} graphs for recommendation scenario (For GraphGen and R2GSync, converting time is specified in parenthesis).}
    \label{fig:recommendation_graph_extraction_performance}
    \vspace*{-0.2cm}
\end{figure*}

\begin{figure*}[htbp]
    \centerline{\includegraphics[width=\textwidth]{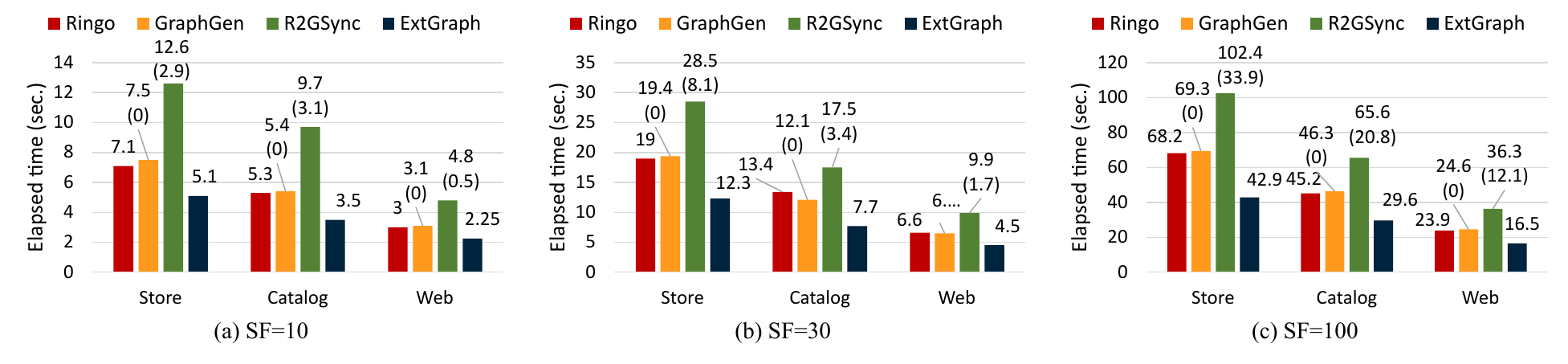}}
    \vspace*{-0.4cm}
    \caption{Performance comparison of extracting \textit{Store}, \textit{Catalog} and \textit{Web} graphs for fraud detection scenario (For GraphGen and R2GSync, converting time is specified in parenthesis).}
    \label{fig:fraud_graph_extraction_performance}
    \vspace*{-0.2cm}
\end{figure*}

Figure~\ref{fig:dblp_graph_models} shows the schema of the DBLP dataset and the graph model, which is also used in other studies\cite{xirogiannopoulos2015graphgen, xirogiannopoulos2017extracting}.
$Auth\text -Edit$ is the relationship between an author and an editor.

Figure~\ref{fig:imdb_graph_models} shows the schema of the IMDB dataset and the graph model, which is also used from other studies\cite{anzum2021r2gsync, xirogiannopoulos2017extracting}.
$Wri\text -Dir$ is the relationship between a writer and a director who worked on the same movie, and $Act\text -Dir$ is the relationship between an actor and a director who worked on the same movie.

\noindent\textbf{Environments: }
We conduct all experiments on a single server with two 16-core 3.0 GHz CPUs, 1 TB of memory, and a 14 TB hard disk. 
The operating system used is Ubuntu 18.04.4.
We implement our method on top of the base system, PostgreSQL 14.4.

\noindent\textbf{Systems Compared: }
Since ExtGraph extracts graphs based on join queries, we used Ringo\cite{perez2015ringo}, GraphGen\cite{xirogiannopoulos2015graphgen, xirogiannopoulos2017extracting}, and R2GSync\cite{anzum2019graphwrangler, anzum2021r2gsync}, the join workload-based graph extraction methods described in Section~\ref{sec:join_based_graph_extraction}, for comparison.
For the fairness of the comparison, we implemented all three methods on top of PostgreSQL extensions like ExtGraph.
In the case of GraphGen and R2Gsync, the extracted graphs are not user-intended graphs as explained in Section~\ref{sec:join_based_graph_extraction}.
Thus, we measure their elapsed times so as to include the time of converting their output to the same user-intended graph like Ringo and ExtGraph.

\vspace{-0.3cm}
\subsection{Graph Extraction Performance}
\label{sec:graph_extraction_performance}

Figure~\ref{fig:recommendation_graph_extraction_performance} and \ref{fig:fraud_graph_extraction_performance} show the elapsed times of graph extraction for two scenarios in Figures~\ref{fig:tpc_ds_graph_models}(a) and \ref{fig:tpc_ds_graph_models}(b), respectively. In both scenarios, ExtGraph demonstrates superior performance compared to the other three methods across all sales channels ($Store$, $Catalog$, $Web$) and scale factors.

In Figure~\ref{fig:recommendation_graph_extraction_performance}, for the $Store$ channel, GraphGen and R2GSync show better performance than Ringo at scale factors SF=10 and SF=30. 
However, Ringo outperforms GraphGen and R2GSync at SF=100. 
GraphGen and R2GSync generate virtual edges as output, which are then converted into user-intended graphs. 
This process resembles JS-MV in terms of materializing the intermediate results (virtual edges) and deriving the final results from them (post-processing). 
The main difference is that our JS-MV materializes only the necessary subqueries (i.e., common joins), whereas GraphGen and R2GSync materialize the entire queries, resulting in additional costs. 
These additional costs are negligible at lower scale factors, allowing GraphGen and R2GSync to exhibit better performance. 
However, at SF=100, these additional costs contribute to performance degradation.
ExtGraph achieves the best performance, in particular outperforms Ringo by 2.34X at SF 10.

In Figure~\ref{fig:fraud_graph_extraction_performance}, we observe that R2GSync exhibits the worst performance. 
This is because R2GSync decomposes the join queries into smaller queries, which has a detrimental impact in this scenario. 
The edges in the fraud detection scenario originate from relatively straightforward relational queries, and breaking them down into smaller queries is less efficient and results in significant post-processing costs.
In contrast, GraphGen achieves performance similar to Ringo because GraphGen does not decompose the queries for the edges at a per-join level in this scenario.
Instead, it decomposes them based on costly joins, avoiding the need for post-processing.
As a result, there is no post-processing cost, and the graphs are extracted in a manner similar to Ringo.
ExtGraph achieves the best performance, in particular outperforms R2GSync by 2.78X at SF 30.

Table~\ref{tab:real_world_graph_extraction_performance} shows the graph extraction performance using real datasets in Figures~\ref{fig:dblp_graph_models} and \ref{fig:imdb_graph_models}. ExtGraph consistently outperforms the other methods, including Ringo. 
However, R2GSync and GraphGen demonstrate notably poor performance in comparison, indicating the inefficiency and significant cost associated with post-processing the decomposed graph structure, as discussed earlier.

\begin{table}[htbp]
\caption{Performance comparison using real datasets (For GraphGen and R2GSync, converting time is specified in parenthesis).}
\label{tab:real_world_graph_extraction_performance}
\vspace{-0.2cm}
\centering
\begin{tabularx}{\linewidth}{c*{1}{Y}c*{4}{Y}}
\toprule
 & \multicolumn{4}{c}{Elapsed Time (sec.)} \\   
 \cmidrule(l){2-5}
Dataset & Ringo & GraphGen & R2GSync & ExtGraph \\
\midrule
DBLP & 769.2 & \begin{tabular}[c]{@{}c@{}}1866.6 \\(1190.1)\end{tabular} & \begin{tabular}[c]{@{}c@{}}1882.1 \\(1214.7)\end{tabular} & \underline{\textbf{669.9}}\\
IMDB & 88.4 & \begin{tabular}[c]{@{}c@{}}116.4 \\(27.3)\end{tabular} & \begin{tabular}[c]{@{}c@{}}388.8 \\(66.2)\end{tabular} & \underline{\textbf{71.2}}\\
\bottomrule
\end{tabularx}
\vspace{-0.3cm}
\end{table}

\subsection{Performance Breakdown}
\label{sec:extgraph_characteristic}

We conducted a performance breakdown experiment to assess the impact of JS-OJ and JS-MV techniques and to evaluate the effectiveness of the hybrid plan combining both techniques. 
For a test graph model, we integrate two scenarios --- recommendation scenario for $Catalog$ and fraud detection scenario for $Store$ --- using the TPC-DS dataset at SF=100.
The resulting edge definitions in Figure~\ref{fig:performance_breakdown}(a) include four queries: $Sell$, $Buy$, $Co\text -pur$, and $Same\text -pro$.
We explored four configurations: no join sharing (both X), JS-OJ, JS-MV, and the hybrid plan (both O).
Figure~\ref{fig:performance_breakdown}(b) shows the hybrid plan generated by Algorithm~\ref{alg:JS_optimization}.

The result of the performance breakdown experiment are presented in Figure~\ref{fig:performance_breakdown}(c).
Here, applying JS-MV alone outperforms applying JS-OJ alone because it optimizes the queries for $Sell$ and $Buy$ edges by reducing redundant operation for $SS \Join I$.
However, JS-OJ is less effective for $Co\text -pur$ and $Same\text -pro$ edges, as discussed in Section~\ref{sec:join_sharing}. 
The hybrid plan, combining both JS-OJ and JS-MV, delivers the best performance among all configurations. 
This is because the hybrid plan does not require the materialization of the query results for $Sell$ and $Buy$ edges, which is managed by its JS-OJ sub-plan (indicated by the red edges in Figure~\ref{fig:performance_breakdown}(b)), and so is more efficient than applying JS-MV alone.
These results confirm that the hybrid plan is highly effective for extracting complex graphs, and our cost model adeptly identifies the most efficient plan for given edge definitions.

\begin{figure}[htbp]
    \centerline{\includegraphics[width=\linewidth]{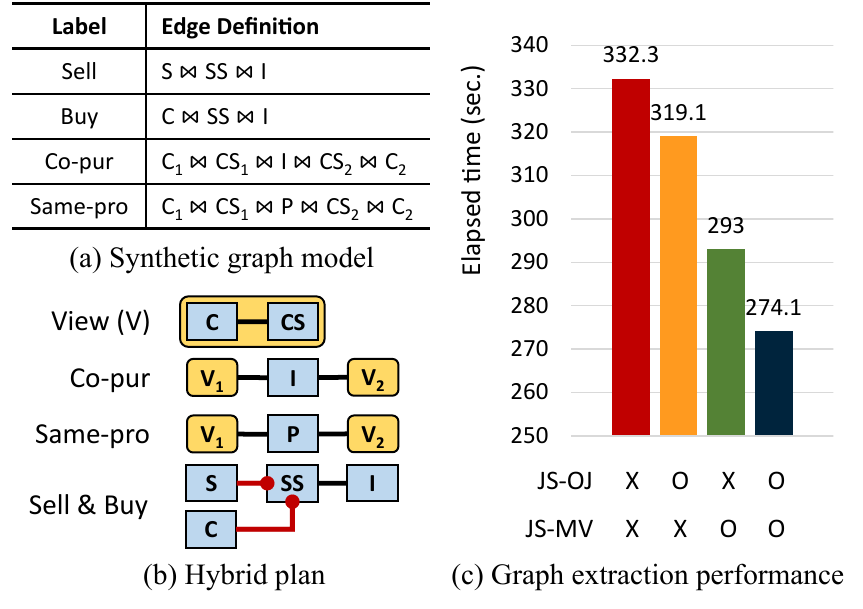}}
    \vspace*{-0.4cm}
    \caption{Results of performance breakdown (SF=30).}
    \label{fig:performance_breakdown}
    \vspace*{-0.2cm}
\end{figure}

\section{Related Works}
\label{sec:related_works}

Schema-based graph extraction leverages referential relationships between tuples in schema information to define graph models. 
Notable methods in this category include R2G~\cite{de2013converting, de2014r2g}, GraphBuilder~\cite{jain2013graphbuilder, willke2012graphbuilder}, Table2Graph~\cite{lee2015table2graph}, GRFusion~\cite{hassan2018extending, hassan2018grfusion}, and Neo4j ETL Tool~\cite{neo4jetl}.
R2G and the Neo4j ETL Tool are designed to automate the definition of graph models from schema. 
GraphBuilder and Table2Graph enhance graph extraction performance by employing the MapReduce model. 
GRFusion, on the other hand, extracts graphs by storing only topological data as an adjacency list, while utilizing relational databases to manage property data. 
However, a common limitation among these methods is that the graph models they can define are significantly constrained by the inherent structure of the referential relationships in the schema.

Join workload-based graph extraction defines graph models by interpreting relationships between tuples through join queries. 
Notable methods include Ringo~\cite{perez2015ringo}, GraphGen~\cite{xirogiannopoulos2015graphgen, xirogiannopoulos2017extracting}, and R2GSync~\cite{anzum2021r2gsync, anzum2019graphwrangler}.
Ringo enables users to create edges from various join queries, offering flexibility in defining diverse graph models. However, executing each join query separately may lead to numerous redundant operations during graph extraction. 
GraphGen aims to reduce both the size of the graph and the time required for its extraction by breaking down complex join queries into virtual edges. 
R2GSync takes a similar approach, simplifying each edge into single joins to hasten graph extraction and facilitate the synchronization of updates from relational databases to graphs. 
Nonetheless, a significant limitation of both GraphGen and R2GSync is that the decomposition of graph structure often results in poorer graph query performance, and edges are restricted to being defined solely from chain queries.


\vspace{-0.3cm}
\section{Conclusions}
\label{sec:conclusions}

In this paper, we have proposed ExtGraph, a novel method for efficiently extracting user-defined graphs from relational databases. 
ExtGraph consolidates multiple queries that share common joins into a single query through outer joins (JS-OJ) or by creating materialized views for the common joins (JS-MV). 
We have proved the single query combined by JS-OJ produces the same result with the original queries, as in Theorem~\ref{th:JS_OJ_validation}.
ExtGraph employs a hybrid plan, optimized based on our proposed cost models, to further enhance performance. 
The hybrid plan identifies more efficient plans for given edge definitions than applying either JS-OJ or JS-MV alone and becomes more effective as graph model complexity increases.
Our comprehensive experiments with TPC-DS, DBLP, and IMDB datasets have shown that ExtGraph outperforms the state-of-the-art graph extraction methods such as Ringo, GraphGen, and R2GSync. 
Specifically, in the TPC-DS dataset, ExtGraph achieved performance improvements of up to 2.34x in the recommendation scenario and up to 2.78x in the fraud detection scenario.

\bibliographystyle{ACM-Reference-Format}
\bibliography{references}

\end{document}